\newtheorem{defn}{Definition}[subsection]
\begin{document}
\title{Cyclic group based mutual authentication protocol for RFID system}


\author{Pramod Kumar Maurya \and 
               Satya Bagchi 
}


\institute{P. K. Maurya \at
          Department of Mathematics\\ 
          National Institute of Technology Durgapur\\
          Burdwan, India. \\
          \email{pramod$\_$kumar22490@hotmail.com }           
          \and
          S. Bagchi \at
          Department of Mathematics\\ 
          National Institute of Technology Durgapur\\
          Burdwan, India. \\
          \email{satya5050@gmail.com} 
}

\date{Received: date / Accepted: date}

\maketitle

\begin{abstract}
Widespread deployment of RFID system arises security and privacy concerns of users. There are several proposals are in the literature to avoid these concerns, but most of them provides reasonable privacy at the cost of search complexity on the server side. The search complexity increases linearly with the number of tags in the system. Some schemes use a group based approach to solve the search complexity problem. In this paper, we proposed a group based authentication protocol for RFID system which is based on some characteristics of cyclic groups. The scheme uses only bitwise XOR and $\mod$ operation for the computational work. Also, the scheme does not use any pseudo-number generator on the tag-side. We use two benchmark metric based on anonymity set to measure the privacy level of the system when some tags are compromised by an adversary. We present some simulation results which show that the scheme preserves high level of privacy and discloses very less amount of information when some tags are compromised. Furthermore, it's formal and informal analysis shows that our scheme preserves information privacy as well as un-traceability and also withstand against various well known attacks.

\keywords{ RFID system \and Cyclic group \and Anonymity \and Authentication protocol \and Security \and Privacy.}
\end{abstract}
\section{Introduction}

RFID technology is becoming most promising technology in industries to improve the efficiency of tracking and managing goods. Because of its convenience and low-cost, we encounter this technology in various applications like supply chain management, logistics, access control, manufacturing, e-health, passport verification etc \cite{DIMITRIOU2016} \cite{Wang2014} \cite{Chien2006} \cite{Maurya2017}. 

RFID system is made up with three entities: tags, readers, and back-end server. Each tag comprises with a microchip for storing and processing data, and an antenna for receiving and transmitting data.  The server stores all the information about the tags and connected with the readers via a secure channel while the readers communicate with the tags over an insecure channel \cite{Tian2012} \cite{Cho2015} \cite{CAO2016} \cite{Gao2014}. 

With the widespread adoption of RFID system in our daily life, security and privacy concerns are also arise critically \cite{Avoine2014} \cite{SASI2007} \cite{Srivastava2014}. We can use cryptography tool to avoid these concerns but the main obstacle to deploy these tools in RFID system is tight constraints on power, memory and computational capability on the tags.

To enhance the security and privacy of the RFID system and reduce the computational complexity, researchers have been proposed a large number of authentication schemes. Here, we describe some tree-based and group-based authentication schemes related to RFID system. In 2004, Molnar and Wagner \cite{Molnar2004} proposed a tree-based approach for symmetric key authentication scheme. The scheme reduces its identification complexity from linear to logarithmic. However, it violates  the privacy of the other tags when some tags are compromised by an adversary. In 2005, Nohara et al. \cite{Nohara2005} proposed a similar kind of authentication scheme. The scheme provides higher privacy than \cite{Molnar2004} in case of one tag is compromised. Butty{\'a}n et al. \cite{Buttyan2006} proposed an optimal key trees for tree-based private authentication scheme in 2006. They used different branching factors at different levels of the tree to enhancing the privacy level of the scheme. Also, they introduce a benchmark metric for measuring privacy level of the system when some tags are compromised. In 2007, Avoine et al. \cite{Avoine2007} developed a group based symmetric key authentication scheme. They improve the trade-off between scalability and privacy by dividing the tags into a number of groups. Also, they analyze the privacy level of the system by using privacy metric when a single tag is compromised as well as any number of tags are compromised. The main draw back of the scheme is to decrease the privacy level when more tags are compromised. In 2017, Rahman et al. \cite{RAHMAN2017} developed a secure anonymous private authentication scheme which is similar to \cite{Avoine2007} except that the scheme used different technique to provide better privacy and ensure more security. They used privacy metric same as in \cite{Avoine2007} to measure the privacy level of the system. Also, the scheme used information leakage metric based on shannon information theory \cite{Shannon2001} to measure the information leakage in bits of the system when some tags are compromised.

After reviewing the work done, we would like to propose a similar kind of group based authentication scheme. The scheme uses some different kind of techniques to improve privacy and minimize computational cost. 

Rest of the paper is organized as follows: In Section 2, we discuss preliminaries and details of our system model. We present adversary model in Section 3. Group based authentication scheme for RFID system is proposed in Section 4. The formal and informal analysis are given in Section 5. Section 6 illustrates the performance of the proposed scheme. In Section 7, we measure the level of privacy of the system when some tags are compromised by an adversary. We discuss simulation results in Section 8. Finally, conclusions are made in Section 9.

\section{Preliminaries and System Model}
In this section, we give brief overview of a cyclic group \cite{Gallian} and using it's properties, we develop our RFID system model. In this system model, we assume that a reader and the server communicate with each other via a secure communication channel. For simplicity, we assume that the reader and the server are combined into one entity, called reader. 

Suppose $G$ be a nonempty set together with an operation $\ast$ that combines any two elements of $G$ is in $G$. The set $G$ together with this operation is a group if it holds group's law. The order of a group $G$ is the total number of elements in the group. It is denoted by $|G|$. Let $H$ be a subset of a group $G$. We say that $H$ is a subgroup of $G$ if it is itself a group under the operation of $G$.

\begin{defn}
A group $G$ is called cyclic if there exists an element $a\in G$ such that $G=<a>=\{ a^n~|~ n\in \mathbb{Z}\}$. The element $a$ is called a generator of $G$.
\end{defn}
Some important characteristics of cyclic groups are as follows:
\begin{enumerate}
\item Suppose $G=<a>$ be a cyclic group of order $n$. Then $G=<a^k>$ iff $gcd(k,n)=1$.
\item Every subgroup of a cyclic group  is cyclic.
\item Suppose $G=<a>$ be a cyclic group of order $n$. The order of any subgroup of $G$ is a divisor of $n$.
\item For each positive divisor $k$ of $n$, the group $G$ has exactly one subgroup of order $k$ denoted by $<a^{n/k}>$.
\end{enumerate}

\begin{center}
\begin{table}[h]
\begin{threeparttable}
\begin{tabular}{|c|c|c|c|}
\hline

                            Subgroup      & Index          & Tag        & Storage data     \\ \hline

\multirow{4}{*}{$H_1=<a_1>$}           &$i_1$           & $T_{1i_1}$      & $[ID_{1i_1}, K_{1i_1}, r_{1i_{1_{old}}}, r_{1i_{1_{new}}}]$   \\ \cline{2-4}
                                       &$i_2$           & $T_{1i_2}$      & $[ID_{1i_2}, K_{1i_2}, r_{1i_{2_{old}}}, r_{1i_{2_{new}}}]$    \\ \cline{2-4} 
                                       &\vdots          & \vdots        & \vdots    \\ \cline{2-4}
                                       &$i_{|H_1|-1}$   & $T_{1i_{|H_1|-1}}$   & $[ID_{1i_{|H_1|-1}}, K_{1i_{|H_1|-1}}, r_{1i_{(|H_1|-1)_{old}}}, r_{1i_{(|H_1|-1)_{new}}}]$    \\ \hline 

               \vdots                  &\vdots          & \vdots        & \vdots    \\ \hline 

\multirow{4}{*}{$H_m=<a_m>$}           &$i_1$           & $T_{mi_1}$      & $[ID_{mi_1}, K_{mi_1}, r_{mi_{1_{old}}}, r_{mi_{1_{new}}}]$   \\ \cline{2-4}
                                       &$i_2$           & $T_{mi_2}$      & $[ID_{mi_2}, K_{mi_2}, r_{mi_{2_{old}}}, r_{mi_{2_{new}}}]$    \\ \cline{2-4} 
                                       &\vdots          & \vdots        & \vdots    \\ \cline{2-4}
                                       &$i_{|H_m|-1}$   & $T_{mi_{|H_m|-1}}$   & $[ID_{mi_{|H_m|-1}}, K_{mi_{|H_m|-1}}, r_{mi_{(|H_m|-1)_{old}}}, r_{mi_{(|H_m|-1)_{new}}}]$    \\

    \hline  
    
 \vdots                  &\vdots          & \vdots        & \vdots    \\ \hline

\end{tabular}

\end{threeparttable}
\caption{The server look-up table}\label{p6_table1}
\end{table}
\end{center}  

We construct a system model for RFID system with the help of above mentioned characteristics of cyclic groups. For this, we choose a cyclic group $G=<a>$ of order $n$ and find its some subgroups $H_i=<a_i>$, $i=1,2,\dots$, where $a_i= a^{n/k}$ for some positive divisor $k$, according to our requirement using characteristics 4. According to characteristics 4, if $a_i$ and $a_j$ are generators of two different subgroups $H_i$ and $H_j$ respectively, then $a_i$ and $a_j$ are distinct elements in $G$. For the system model, with each element of $H_i$ except the identity element, we assign a tag together with some secret parameters. Additionally, if $H_p$ is the highest order subgroup (say, $|H_p|=P$) and $H_{p-1}$ is the second highest order subgroup (say, $|H_{p-1}|=Q$) in the system. Then we utilize only $Q$ number of elements of $H_p$ to assign tags in such a way so that the value of $i$ (which is used as an index in server look-up table) is same in both the subgroups.    

We made a look-up table for the server which is shown in Table \ref{p6_table1}. From the Table \ref{p6_table1}, we can see that with each element $a_j^i$ in $H_j=<a_j>$, a tag $T_{ji}$ is assigned and $i$ is used as a index in the look-up table for the tag $T_{ji}$. The server stores a unique identification number $ID_{ji}$ and a secret key $K_{ji}$ to the tag $T_{ji}$. It also shows that two nonce $r_{ji_{old}}$ and $r_{ji_{new}}$ are associated with the tag $T_{ji}$. Initially, we take $r_{ji_{old}}=0$ and $r_{ji_{new}}$ be a nonce.

For each tag $T_{ji}$ which is associated with an element $a_j^i$ in $H_j$, we store inverse of $a_j^i$ in $H_j$, i.e. $(a_j^i)^{-1}$, and $i$ in the tag's internal memory. We also store a unique identification number $ID_{ji}$, a secret key $K_{ji}$, and a nonce $R_4$ inside the tag's memory. Initially, $R_4$ is same as $r_{ji_{new}}$ which is stored in the server's look-up table for tag $T_{ji}$.

\section{Adversary Model}
In this section, we present the ability of an adversary $\mathscr{A}$. The adversary is capable to interact with the RFID system $S$ and also, eavesdrops, intercepts, and modifies any transmitted message between any reader and any tag in the system. Our adversarial model is similar to the model proposed by Juels and Weis \cite{Juels2009} with some modifications to meet our requirement. $\mathscr{A}$ is also able to send the following queries to an oracle.
\begin{enumerate}
\item SendTag$(m, T_{ji})\rightarrow m'$ \\
The adversary $\mathscr{A}$ may send a message $m$ to the tag $T_{ji}$ which responds with message $m'$.
\item SendReader$(m, R)\rightarrow m'$ \\ 
 $\mathscr{A}$ can intract with a reader $R$ by sending a message $m$. The reader $R$ responds with message $m'$.
\item DrawTags$(S)$ \\
The adversary has access to a set of tags at any time from the system with this oracle query. 
\item Corrupt$(T_{ji})$ \\
$\mathscr{A}$ is able to access the volatile memory as well as non volatile memory of a tag $T_{ji}$.
\end{enumerate}

We also bound the adversary $\mathscr{A}$ to use SendTag and SendReader queries by $r$ and $t$ respectively. $\mathscr{A}$ can performs $s$ number of computational steps. At a time, $\mathscr{A}$ is able to send Corrupt message to atmost $(n-2)$ number of tags where $n$ is the total number of tags obtained from DrawTags query.

\subsection{Privacy Experiment}
We denote privacy experiment for an RFID system $S$ by $EXP_{\mathscr{A}, S}^{priv}[k,n,r,s,t]$, where $r$, $s$, and $t$ represent the capability of an adversary to use SendTag, computations steps and SendReader respectively. Also, $k$ represents a security parameter. An RFID authentication protocol is considered to be private if no adversary has significant advantage in this experiment.

The main goal of the adversary in the experiment is to distinguish between two different tags with in its computational and interaction limits. The experiment is composed in three phases as follows:
\begin{enumerate}
\item Learning Phase: The adversary $\mathscr{A}$ interacts with the system $S$ and inqueries oracle queries without exceeding its bound and analyze them.
\item Challenging Phase: $\mathscr{A}$ selects two uncorrupted tags from the pool obtained by Drawtags oracle. $\mathscr{A}$ randomly selects any one from them. The adversary evaluates oracles on that particular tag.
\item Guessing Phase: $\mathscr{A}$ outputs a guess bit $b$. $\mathscr{A}$ is expected to produce 1 if he succeeds, otherwise 0.
\end{enumerate}
$EXP$ succeed if $b=1$.

\subsection{Privacy Definition [$(r,s,t)$-privacy]}
According to Juels and Weis \cite{Juels2009}, an RFID authentication protocol with security parameter $k$ is $(r,s,t)$-private if 
$$Pr[EXP_{\mathscr{A}, S}^{priv}[k,n,r,s,t] ~succeeds~ in ~guessing ~b]\leq \frac{1}{2}+\frac{1}{poly(k)},$$ 
where $poly(k)$ is any polynomial function of $k$.




\section{Process}
In this section, we propose a group based authentication protocol which works under all circumstances required for RFID systems. Used notations in this protocol are given in Table \ref{p6_table2}, and proposed protocol is shown in Figure \ref{p6_figure1}. The work flow of the proposed scheme is as follows:





\begin{table}[ht]
\begin{center}
\begin{tabular}{|c l|}  \hline
Notation     & Description  \\ \hline
$G=<a>$      & A finite cyclic group. \\
$H_j=<a_j>$  & A subgroup of the group $G$ with generator element $a_j$. \\
$(a_j^k)^{-1}$ & Inverse element of $a_j^k$ in the subgroup $H_j$. \\
$R_m:m=1, 2, 3$& Nonce generated by the reader.\\
$K_{ji}$     & Secret key of a tag $T_{ji}$ associated with the $i^{th}$ element of a subgroup $H_j$.\\
$ID_{ji}$    & Unique identification number of the tag $T_{ji}$.\\
$\oplus $    & Exclusive-or operation. \\ \hline
\end{tabular}
\caption{Notations and symbols used in proposed scheme}\label{p6_table2}
\end{center}
\end{table}



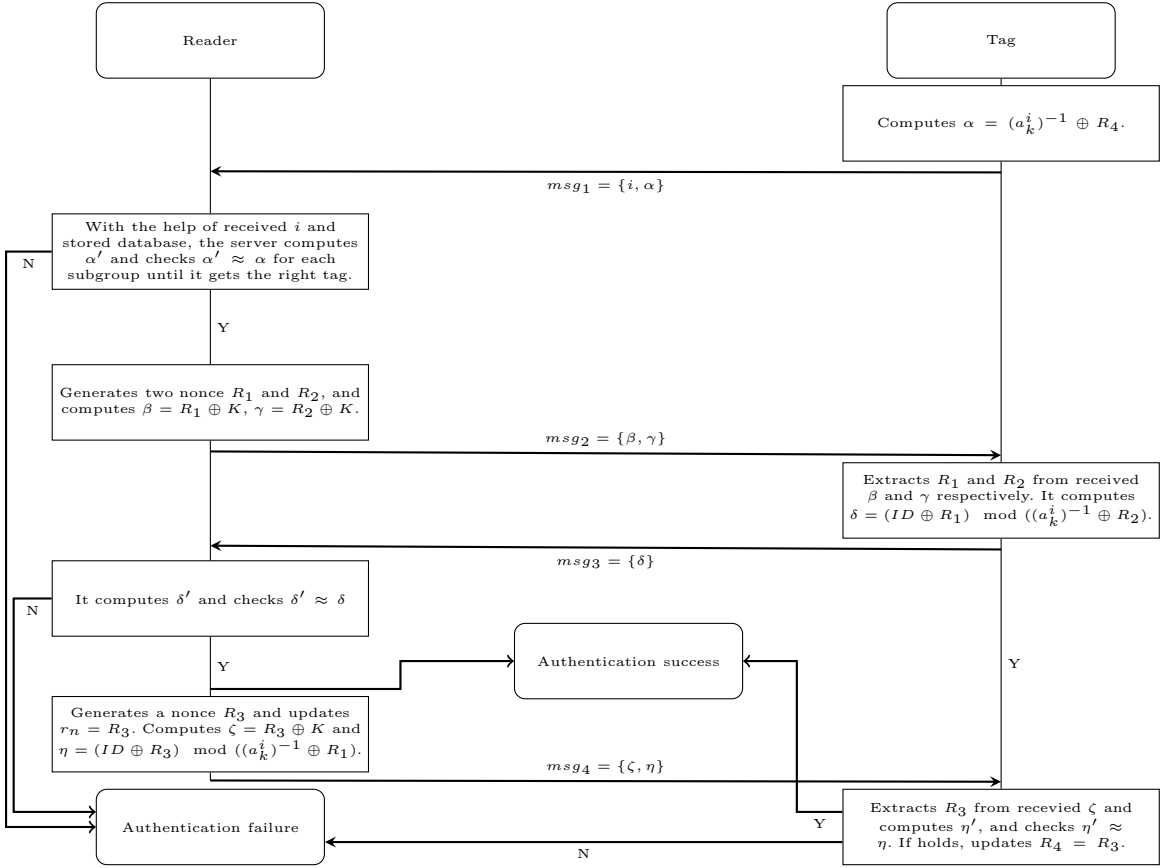
\begin{figure}[t]
\tikzstyle{startstop} = [rectangle, rounded corners, minimum width=3cm, minimum height=1cm,text centered, draw=black,]

\tikzstyle{io} = [trapezium, trapezium left angle=70, trapezium right angle=110, minimum width=3cm, minimum height=1cm, text centered, draw=black,]

\tikzstyle{process} = [rectangle, minimum width=3cm, minimum height=1cm, text centered, text width=4cm, draw=black,]
\tikzstyle{decision} = [diamond, minimum width=.2cm, minimum height=.2cm, text centered,text width=1.4cm, draw=black,]
\tikzstyle{nil}      = []

\tikzstyle{pro} = [rectangle, minimum width=1cm, minimum height=1cm, text centered, text width=2cm, draw=black,]

\tikzstyle{line}      = [draw, -stealth, thick]
\tikzstyle{arrow} = [thick,->,>=stealth]

\begin{tikzpicture}[node distance=5.2cm, auto]
\tiny
\node (reader) [] {};

\node (server) [startstop, left of=reader]  {Reader};
\node [startstop, right of=reader] (tag) {Tag};

%
%
%
%
%
%


%

\node [process, below of=tag, node distance=1.1cm] (process1) {Computes $\alpha=(a^i_k)^{-1}\oplus R_4$.};

\node [process, below of=process1, node distance=5.0cm] (process2) {Extracts $R_1$ and $R_2$ from received $\beta$ and $\gamma$ respectively. It computes $\delta=(ID\oplus R_1)\mod ((a^i_k)^{-1}\oplus R_2)$.};


\node [process, below of=process2, node distance=4.33cm] (process3) {Extracts $R_3$ from recevied $\zeta$ and computes $\eta'$, and checks $\eta'\approx \eta$. If holds, updates $R_4=R_3$.};

\node [startstop, below of=process3, node distance=-2.2cm,xshift=-4.9cm] (process4) {Authentication success};

\node [startstop,  below right of=process3,node distance=0.0cm,xshift=-10.4cm] (process5){Authentication failure};

\draw [-] (tag) -- (process1);
\draw [-] (process1) -- (process2);
\draw [-] (process2) --  node{Y} (process3);


\node [process, below of=server, node distance=2.8cm] (process6) {With the help of received $i$ and stored database, the server computes $\alpha'$ and checks $\alpha'\approx \alpha$ for each subgroup until it gets the right tag.};



\node [process, below of=process6, node distance=2cm] (process7) {Generates two nonce $R_1$ and $R_2$, and computes $\beta=R_1\oplus K$, $\gamma=R_2\oplus K$.};


\node [process, below of=process7, node distance=2.6cm] (process8) {It computes $\delta'$ and checks $\delta'\approx \delta$};
\node [process, below of=process8, node distance=1.8cm] (process9) {Generates a nonce $R_3$ and updates $r_n=R_3$. Computes $\zeta=R_3\oplus K$ and $\eta=(ID\oplus R_3)\mod ((a^i_k)^{-1}\oplus R_1)$.};

\node [nil, below of=process9, node distance=.7cm] (process10) {};

\draw [-] (server) -- (process6);
\draw [-] (process6) -- node{Y}(process7);
\draw [-] (process7) -- (process8);
\draw [-] (process9) -- (process10);

\draw [-] (process8) --node{Y} (process9);

\draw [-stealth, thick,  ->]
   (process6.west)  --
   node {N} ++(-0.6cm,0.0) |-
    (process5.west)
    ;

\draw [-stealth, thick,  ->]
   (process8.west)  --
   node {N} ++(-0.5cm,0.0) |-
    ($(process5.west)+(0,.2)$)
    ;

\draw [-stealth, thick,  ->]
   ($(process8)-(0,1.2)$)  --
    ++(2.5cm,0.0) |-
    (process4.west)
    ;

\draw [-stealth, thick,  ->]
   ($(process3.west)+(0,.2)$)   --node {Y}
   ++(-0.6cm,0.0) |-
    (process4.east)
    ;





%
\draw [arrow]   ($(process1)-(0,.65)$)--node {${msg}_1=\{i,\alpha\}$} ($(process6)+(0,1.065)$);

\draw [arrow]   ($(process7)-(0,.65)$)--node {${msg}_2=\{\beta, \gamma\}$} ($(process2)+(0,.6)$);
\draw [arrow]   ($(process2)-(0,.65)$)--node {${msg}_3=\{\delta\}$} ($(process8)+(0,.7)$);

\draw [arrow]   (process10.north)-- node {${msg}_4=\{\zeta, \eta\}$}($(process3)+(0,.6)$);
\draw [arrow] ($(process3.west)-(0,.2)$) -- node{N}($(process5.east)-(0,.2)$);

%
%
%

\end{tikzpicture}
\caption{Proposed Mutual Authentication Protocol }\label{p6_figure1}
\end{figure}

\begin{enumerate}
\item ${msg}_1 : T_{ki}\rightarrow R : \{i,\alpha\}$ \\
The tag $T_{ki}$ computes $\alpha=(a^i_k)^{-1}\oplus R_4$ and forms a request message ${msg}_1=\{i,\alpha\}$. The tag sends ${msg}_1$ to a reader $R$.

\item ${msg}_2 : R\rightarrow T_{ki}  : \{\beta,\gamma\}$ \\
 After receiving the tag's request message ${msg}_1$, the reader uses $i$ as an index (as in Table \ref{p6_table1}) to performs the following steps for all the subgroups until it finds the right tag:
\begin{enumerate}
\item It calculates the inverse of $a^i_k$ in $H_k$, where $a_k$ is the generator of subgroup $H_k$.
\item The reader computes $\alpha'=(a^i_k)^{-1}\oplus r_{ki_{old}}/r_{ki_{new}}$ and checks whether $\alpha'$ is equal to the received $\alpha$ or not. If so, it gets the right tag $T_{ki}$ (say) inside the subgroup $H_k$. If fails, the reader terminates the protocol.
\item The reader generates two nonce $R_1$ and $R_2$, and computes $\beta=R_1\oplus K_{ki}$, $\gamma=R_2\oplus K_{ki}$, where $K_{ki}$ is the secret key of the tag $T_{ki}$. The reader forms a response message ${msg}_2=\{\beta, \gamma\}$ and transmits it to the tag.
\end{enumerate}

\item ${msg}_3 : T_{ki}\rightarrow R  : \{\delta\}$ \\
Upon receiving the response message ${msg}_2$, the tag $T_{ki}$ extracts $R_1$ and $R_2$ from $\beta$ and $\gamma$ respectively with the help of its secret key $K_{ki}$. It computes $\delta= (ID_{ki}\oplus R_1) \mod ((a^i_k)^{-1}\oplus R_2)$ and send $\delta$ inside the response message $msg_3=\{\delta$\} to the reader.

\item ${msg}_4 : R\rightarrow T_{ki}  : \{\zeta, \eta\}$ \\
After receiving message ${msg}_3$ from the tag $T_{ki}$, the reader calculates $\delta'= (ID_{ki}\oplus R_1) \mod ((a^i_k)^{-1}\oplus R_2)$ for the tag $T_{ki}$ and checks whether $\delta'$ is equal to the received $\delta$ or not. If it holds, the reader authenticates the tag $T_{ki}$ otherwise terminates the session. If tag's authentication succeed, the reader generates a nonce $R_3$ and assigns $r_{ki_{old}}=r_{ki_{new}}$, and $r_{ki_{new}}=R_3$. Simultaneously, also computes $\zeta=R_3\oplus K_{ki}$, $\eta=(ID_{ki}\oplus R_3) \mod ((a^i_k)^{-1}\oplus R_1)$ and forms a response message $msg_4=\{\zeta, \eta\}$. It sends $msg_4$ to the tag $T_{ki}$. 

\item  
Upon receiving message $msg_4$, the tag $T_{ki}$ extracts $R_3$ from $\zeta$ and calculates $\eta'=(ID_{ki}\oplus R_3) \mod ((a^i_k)^{-1}\oplus R_1)$. The tag checks whether $\eta'$ is equal to the received $\eta$ or not. If so, the tag authenticates the reader and updates $R_4=R_3$ for further communication. 
\end{enumerate}

\section{Security and Privacy Analysis}
In this section, we present formal and informal analysis of our proposed scheme with respect to above mentioned adversary model. The formal analysis shows that the proposed scheme preserves privacy and un-traceability. Also, it's informal analysis shows that the proposed scheme is secure against various well-known attacks.
\subsection{Formal Security Analysis}
\begin{theorem}
The proposed scheme attains information privacy with respect to a adversary $\mathscr{A}$. 
\end{theorem}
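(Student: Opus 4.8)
The plan is to work entirely inside the $(r,s,t)$-privacy experiment $EXP_{\mathscr{A},S}^{priv}[k,n,r,s,t]$ and to show that the complete transcript visible to $\mathscr{A}$ is, up to a negligible term, statistically independent of the challenge bit $b$, so that no strategy beats random guessing by more than $1/poly(k)$. First I would fix the adversary's view: across the at most $r+t$ interactions permitted by the SendTag and SendReader bounds, everything $\mathscr{A}$ sees is assembled from the four components $\{i,\alpha\}$, $\{\beta,\gamma\}$, $\{\delta\}$, and $\{\zeta,\eta\}$, where $\alpha=(a_k^i)^{-1}\oplus R_4$, $\beta=R_1\oplus K_{ki}$, $\gamma=R_2\oplus K_{ki}$, $\delta=(ID_{ki}\oplus R_1)\mod((a_k^i)^{-1}\oplus R_2)$, $\zeta=R_3\oplus K_{ki}$, and $\eta=(ID_{ki}\oplus R_3)\mod((a_k^i)^{-1}\oplus R_1)$. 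Since the two challenge tags are uncorrupted and each tag is assigned an independent triple $(ID_{ki},K_{ki},(a_k^i)^{-1})$, the up to $n-2$ Corrupt queries reveal only the secrets of \emph{other} tags and give $\mathscr{A}$ no direct information about the challenge secrets; it then remains to show that the transcript itself leaks nothing usable.

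The easy part handles the XOR-masked quantities $\alpha,\beta,\gamma,\zeta$. I would invoke the perfect secrecy of the one-time pad: because each of $R_1,R_2,R_3,R_4$ is a fresh, uniformly chosen nonce that is never reused, each of these four values is uniformly distributed over its range and carries no information about the underlying secret $(a_k^i)^{-1}$, $K_{ki}$, or $R_3$. A point worth emphasising is that $R_4$ is refreshed to $R_3$ after every successful run, so the mask on $\alpha$ is genuinely fresh in each session; this prevents $\mathscr{A}$ from correlating successive values of $\alpha$ for a fixed tag, which is exactly what also secures untraceability.

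The hard part will be the two modular components $\delta$ and $\eta$, since the $\mod$ operation does not inherit the exact perfect-secrecy guarantee that XOR enjoys. Here the secrets $ID_{ki}$ and $(a_k^i)^{-1}$ appear as operands of a reduction whose dividend and modulus are themselves XOR-masked by unknown nonces ($R_1,R_2$ in $\delta$; $R_3,R_1$ in $\eta$). I would argue that because both operands are uniformly randomised by these masks, for any candidate pair of secret values there exist nonce assignments yielding the same observed residue; hence the distribution of $\delta$, and of $\eta$, conditioned on the secrets is the same, up to a negligible statistical distance, for the two challenge tags. Making this quantitative, i.e.\ bounding the bias the reduction can introduce when the modulus is not much larger than the dividend, is the step I expect to require the most care, and I would lean on the assumption that the nonces and identifiers are drawn from ranges large enough in the security parameter $k$ to drive any residual bias below $1/poly(k)$.

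Finally I would account for the one quantity sent in the clear, the index $i$. By the system-model construction the same index $i$ is reused across all $m$ subgroups, so revealing $i$ pins a tag down only to its anonymity set of size $m$ rather than uniquely; in particular, two challenge tags drawn from different subgroups with a common index produce identical public indices and are therefore indistinguishable on this coordinate. Combining the uniform masking of $\alpha,\beta,\gamma,\zeta$, the no-leakage property of $\delta,\eta$, and the group-structure hiding of $i$, I would conclude that $\mathscr{A}$'s view is independent of $b$ up to a negligible term, so that $\Pr[EXP_{\mathscr{A},S}^{priv}[k,n,r,s,t]\text{ succeeds}]\le \tfrac12+\tfrac{1}{poly(k)}$, which is precisely the required information-privacy bound.
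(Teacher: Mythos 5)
Your proposal is far more ambitious than the paper's own argument, which merely replays the three phases of the privacy experiment and then asserts that the adversary cannot win without knowing all the secrets stored in $T_b$'s memory and the mother group $G$; you instead try to prove that the transcript is statistically independent of the challenge bit. That is the right kind of argument to attempt, but as written it has a genuine gap: the one-time-pad step fails because the nonces are reused across components of the same session. Within one run, $R_1$ appears both in $\beta=R_1\oplus K_{ki}$ and inside the modulus of $\eta$, $R_2$ appears in both $\gamma$ and the modulus of $\delta$, and $R_3$ appears in both $\zeta$ and the dividend of $\eta$; moreover $\beta\oplus\gamma=R_1\oplus R_2$ is directly computable by the adversary and is a nontrivial relation between the two masks. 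Hence the four XOR-masked values are not independently one-time-padded, and your analysis of $\delta$ and $\eta$ conditions on the secrets but not jointly on $\beta,\gamma,\zeta$ from the same session, which is exactly where leakage must be ruled out: given $\beta$ and $\gamma$, one has $\delta=(ID_{ki}\oplus\beta\oplus K_{ki})\bmod((a_k^i)^{-1}\oplus\gamma\oplus K_{ki})$, which depends on the secrets through the single unknown $K_{ki}$, and nothing in your argument bounds the advantage this conditional dependence confers.

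Second, your claim that the mask on $\alpha$ is ``genuinely fresh in each session'' holds only for sessions that complete successfully: the tag updates $R_4\leftarrow R_3$ only after verifying $msg_4$. An adversary who issues SendTag$(init,T)$ twice without completing the protocol receives the identical $\alpha=(a_k^i)^{-1}\oplus R_4$ both times, so the per-session freshness on which both your independence claim and the paper's untraceability argument rest does not hold against the adversary model as defined, since nothing forces $\mathscr{A}$ to let sessions complete. To close the proof you would need either to restrict the adversary to completed sessions or to show that repeated $\alpha$ values across the learning and challenge phases cannot be correlated, and the latter is false for the scheme as specified. In short, your proposal identifies the correct proof obligations, and is a genuinely different and more principled route than the paper's, but it does not discharge the two obligations that actually matter.
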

\begin{proof}
We assume that the proposed scheme does not preserve information privacy. So the success probability of the adversary to win experiment is non-negligible. $\mathscr{A}$'s privacy game is composed in three phases as follows:
\begin{itemize}
\item Learning Phase: The adversary gets a set of $n$-tags by querying DrawTags oracle. $\mathscr{A}$ can send any oracle queries to a tag $T_i$ (say) without exceeding its computation bound and analyze them. $\mathscr{A}$ can use Corrupt oracle to atmost $n-2$ tags.
\begin{center}
     $T_i \leftarrow$ DrawTag$(S)$

     $msg_1=\{i, \alpha\}\leftarrow$ SendTag$(init, T_i)$

     $msg_2=\{\beta, \gamma\}\leftarrow$ SendReader$(msg_1, R)$

     $msg_3=\{i, \delta\}\leftarrow$ SendTag$(msg_2, T_i)$

     $msg_4=\{\zeta, \eta\}\leftarrow$ SendReader$(msg_3, R)$

    $\{i, (a^i_j)^{-1}, K, ID, R\}\leftarrow$ Corrupt$(T_i).$
\end{center}
\item Challenge Phase: The adversary $\mathscr{A}$ selects two uncorrupted tags say, $T_i$ and $T_j$, from the set of tags obtained by DrawTags query as its challenge tags. Let $T_0^*=T_i$, $T_1^*=T_j$, and $b \in \{0, 1\}$. $\mathscr{A}$ randomly selects $T_b$ among them and analyze all queries run on it. Note that $\mathscr{A}$ is not able to use Corrupt oracle on that particular tag $T_b$. 
\begin{center}
     
     $msg_1=\{i, \alpha\}\leftarrow$ SendTag$(init, T_b)$

     $msg_2=\{\beta, \gamma\}\leftarrow$ SendReader$(msg_1, R)$

     $msg_3=\{i, \delta\}\leftarrow$ SendTag$(msg_2, T_b)$

     $msg_4=\{\zeta, \eta\}\leftarrow$ SendReader$(msg_3, R).$

    \end{center}
\item Guess Phase: Eventually, the adversary  outputs a guess bit $b'$ for the corresponding tag. 
\end{itemize} 
 The adversary wins the experiment if $b'=b$. It is possible only when the adversary knows all the secrets stored in $T_b$'s internal memory as well as the mother group $G$. So our assumption is wrong. Hence the proposed scheme preserves the information privacy with respect to $\mathscr{A}$. 
\end{proof}

\begin{theorem}
The proposed scheme provides un-traceability with respect to the adversary $\mathscr{A}$. 
\end{theorem}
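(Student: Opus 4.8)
The plan is to reuse the three-phase privacy experiment $EXP_{\mathscr{A}, S}^{priv}[k,n,r,s,t]$ from Theorem 1, but to reinterpret the Challenge Phase so that the adversary's task is to \emph{link} transcripts rather than merely name a tag: $\mathscr{A}$ selects two uncorrupted tags $T_0^*$ and $T_1^*$ from the DrawTags pool, a hidden bit $b\in\{0,1\}$ selects $T_b$, the oracles are evaluated on $T_b$ over several runs, and $\mathscr{A}$ outputs a guess $b'$. Un-traceability is then the statement that no $(r,s,t)$-bounded $\mathscr{A}$ achieves $Pr[b'=b] > \tfrac{1}{2}+\tfrac{1}{poly(k)}$, so I would assume the contrary and drive toward a contradiction, mirroring the structure of the proof of Theorem 1.

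The heart of the argument is to show that every quantity $\mathscr{A}$ observes on the insecure channel is, without the tag's long-term secrets, uniformly distributed and independent across sessions. First I would handle $\alpha=(a^i_k)^{-1}\oplus R_4$: since step 5 of the protocol refreshes $R_4$ to the fresh reader nonce $R_3$ after each successful run, $R_4$ serves as a one-time pad over the secret $(a^i_k)^{-1}$, so $\alpha$ is uniform and carries no stable identifier. The components $\beta=R_1\oplus K_{ki}$, $\gamma=R_2\oplus K_{ki}$ and $\zeta=R_3\oplus K_{ki}$ are treated identically, each being a fresh nonce masked by the unknown key $K_{ki}$. For $\delta=(ID_{ki}\oplus R_1)\mod((a^i_k)^{-1}\oplus R_2)$ and $\eta=(ID_{ki}\oplus R_3)\mod((a^i_k)^{-1}\oplus R_1)$, both dividend and modulus are masked by fresh nonces, so these reductions expose no function of $(ID_{ki},(a^i_k)^{-1})$ that is stable across runs.

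I expect the delicate step to be the plaintext index $i$ carried in $msg_1$, which is a fixed attribute of the tag and would naively let $\mathscr{A}$ link every run of $T_b$ at once. The resolution is to invoke the system-model construction of Section 2: by characteristic 4, the same index $i$ labels exactly one element in each subgroup $H_1,\dots,H_m$, so revealing $i$ localises the tag only to its \emph{anonymity set} $\{T_{ji}\}_{j}$ rather than to a single tag. I would therefore require that the two challenge tags be drawn from a common index class, reducing un-traceability to indistinguishability \emph{within} that class, where the transcripts are identically distributed by the uniformity established above. Combining these observations, the advantage of $\mathscr{A}$ collapses to a negligible term, contradicting the assumption and yielding un-traceability with respect to $\mathscr{A}$.
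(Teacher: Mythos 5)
Your proposal follows the same skeleton as the paper's proof --- assume traceability, run the three-phase Learning/Challenge/Guess game on two uncorrupted tags, and derive a contradiction from the freshness of the nonces --- but it is substantially more thorough than what the paper actually writes. The paper's entire argument is the single observation that winning would require $Pr[msg_1^*=msg_1]=1$, whereas $msg_1$ depends on $R_4$, which changes every run; it says nothing about $\beta,\gamma,\delta,\zeta,\eta$, and, critically, nothing about the index $i$. You correctly identify $i$ as the delicate point: $msg_1=\{i,\alpha\}$ transmits $i$ in the clear, $i$ is a fixed attribute of the tag, and two challenge tags with different indices are trivially distinguishable from their first messages alone. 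The paper's proof silently skips this. Your resolution --- restricting the challenge tags to a common index class, so that un-traceability holds only within the anonymity set $\{T_{ji}\}_j$ --- is the honest conclusion, but note that it proves a strictly weaker statement than the theorem as written; as stated, with an adversary free to pick any two uncorrupted tags, the claim fails against your own observation. You should flag this explicitly rather than fold it into the proof as a hypothesis.

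One further gap affects both your argument and the paper's: the one-time-pad claim for $\alpha=(a^i_k)^{-1}\oplus R_4$ relies on $R_4$ being refreshed, but step 5 updates $R_4$ only after a \emph{successful} mutual authentication. An adversary who blocks $msg_4$ (which the model permits via message interception) freezes $R_4$, so the same tag emits the identical $\alpha$ in consecutive interrogations and is linkable between successful sessions. Your phrase ``after each successful run'' brushes against this but does not confront it; the independence-across-sessions claim therefore needs either a restriction to undisturbed sessions or an explicit acknowledgment that traceability holds in the window between completed runs.
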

\begin{proof}
Let us assume that the proposed scheme is traceable. i.e. the adversary can trace a tag at any time. This means $\mathscr{A}$ is able to distinguish  between two tags. We show that our assumption is wrong with the help of $\mathscr{A}$'s privacy game which is as follows:
\begin{itemize}
\item Learning Phase: $\mathscr{A}$ uses DrawTags query for the system $S$ and gets access to $n$-tags. For all the tags, $\mathscr{A}$ sends SendTag and SendReader queries to get transmitted information among a reader and tags. The adversary analyzes all the transmitted message. The adversary can use Corrupt query for atmost $n-2$ tags because the goal of the privacy game is to distinguish between two uncorrupted tags.
\begin{center}
     $T_i \leftarrow$ DrawTag$(S)$

     $msg_1=\{i, \alpha\}\leftarrow$ SendTag$(init, T_i)$

     $msg_2=\{\beta, \gamma\}\leftarrow$ SendReader$(msg_1, R)$

     $msg_3=\{i, \delta\}\leftarrow$ SendTag$(msg_2, T_i)$

     $msg_4=\{\zeta, \eta\}\leftarrow$ SendReader$(msg_3, R)$

    $\{i, (a^i_j)^{-1}, K, ID, R\}\leftarrow$ Corrupt$(T_i).$
\end{center}
\item Challenge Phase: The adversary selects two uncorrupted tags $T_i$ and $T_j$ to which it did not send Corrupt query in the learning phase. $\mathscr{A}$ randomly selects $T_b: b \in \{i, j\}$ among them. The adversary queries all the oracle queries except Corrupt query to the tag $T_b$ and evaluates them.
\begin{center}
     
     $msg_1^*=\{i, \alpha\}\leftarrow$ SendTag$(init, T_b)$

     $msg_2^*=\{\beta, \gamma\}\leftarrow$ SendReader$(msg_1^*, R)$

     $msg_3^*=\{i, \delta\}\leftarrow$ SendTag$(msg_2^*, T_b)$

     $msg_4^*=\{\zeta, \eta\}\leftarrow$ SendReader$(msg_3^*, R).$
\end{center} 

\item Guess Phase: $\mathscr{A}$ outputs a guess bit $b'$.   
\end{itemize}
The adversary wins the game if $b'=b$ but it is possible only when if $$Pr[msg_1^*=msg_1]=1$$ Since the message $msg_1$ depend upon the tag's nonce $R_4$ which is different in each protocol run. So our assumption is wrong. Hence the adversary is unable to trace the tag.
\end{proof}

\subsection{Informal Security Analysis}
\subsubsection{Replay Attack Resistance}
An adversary can eavesdrops the wireless channel and keeps all the transmitted messages between a reader and a tag. The adversary uses these message into another session to disguise itself as the tag or the reader to deceive the other one. In the proposed scheme, it is infeasible for an adversary to forge messages as a valid tag/reader because each transmitted message incorporates a fresh nonce in each authentication session which can not be get by the adversary (since the nonce XOR with some other secret parameters). This makes all the replayed message by the adversary are illegal message. Thus the scheme prevents strongly the replay attack.
\subsubsection{De-synchronization Attack Resistance}
For each tag, the server stores two nonce $r_{old}$ and $r_{new}$ in its database to save the scheme from the de-synchronization attack. The server also updates these values after a successful authentication session. An adversary intercepts or modified any transmitted message in one session in such a way so that a tag does not update the value of stored nonce. The server can authenticate the legitimate tag by its old value stored in database into another session. So it is not possible for an adversary to de-synchronize the scheme.
\subsubsection{Man-in-Middle Attack Resistance}
An adversary is unable to act as a middle man in between a reader and a tag because it is infeasible for the adversary to intercepts any transmitted message without knowing the secret key, unique identification number, and knowledge about the cyclic group. The probability of guessing or calculating these values from the transmitted message is negligible because a fresh nonce is used in each transmitted message.   

\section{Performance Analysis}
In this section, we present efficiency of our proposed scheme in terms of tag computation, server computation, and storage, as described in Table \ref{p6_table3}. The proposed scheme's search complexity is $O(\gamma)$ which is same as in Avione et al. \cite{Avoine2007} but better than Rahman et al. \cite{RAHMAN2017}. During the authentication phase, the scheme performs only bit-wise XOR operation whereas schemes of Avoine et al. and Rahman et al. perform symmetric key encryption and decryption. Also, the proposed scheme does not use any pseudo number generator function for generating nonce on the tag-side. It uses nonce generated by the reader. We assume that all the parameters used in the proposed scheme are $L$-bits long. On the tag side, our scheme keeps five items. Thus the storage cost is $5L$ bits. The proposed scheme also provides mutual authentication among a reader and tags. When we compare with \cite{Avoine2007} and \cite{RAHMAN2017} in terms of computation, the proposed scheme performs very less computation which is optimal for the real world tiny powered tags.
\begin{center}
\begin{table}[h]
\begin{tabular}{|l|c|c|c|c|}
\hline
Protocol                              & Entity            & Avoine               & Rahman                & Proposed protocol \\ 
                                      &                   & \cite{Avoine2007}     & \cite{RAHMAN2017}       &             \\ \hline

Symmetric encryption/                &T                   & 2                    & 2                           &   $\times$ \\ \cline{2-5}
decryption                           &R                   & 2                    & 2                           &   $\times$ \\ \hline

Search complexity                    &R                   & $O(\gamma)$          & $O(\gamma +|\pi|)$           & $O(\gamma)$\\ \hline

No. of PRNG                          &T                   & 1                    & 1                           & $\times$ \\ \hline


Required memory                      & T                 & $3L$                  &$(m+2)L$                 & $5L$\\ \hline

Mutual authentication                &                   &  $\times$             &  $\times$               & \checkmark \\ \hline

\end{tabular}
\begin{tablenotes}
    \item[] $\gamma$- Total number of groups in the system.
    \item[] $|\pi|$ - Total number of secret keys of a tag associated with the identifier $ID_x$.
    \item[] $m$ - Number of identifier is assigned to each tag.
  \end{tablenotes}
\caption{Computation cost performance comparison}\label{p6_table3}
\end{table}
\end{center}


\section{Measurement of Privacy}
In this section, we analyze the privacy level of our proposed scheme in terms of anonymity set and data leakage in bits. For the anonymity sets, we use privacy metric introdued by \cite{Buttyan2006}. Also, we use another metric says information leakage for data leakage proposed by shannon \cite{Shannon2001} and used in \cite{Nohl2006} \cite{RAHMAN2017} to measure the information (in bits) disclosed by the proposed scheme when some tags are compromised.

Both the metric use disjoint partition sets of tags for observation. When some tags are compromised, the set of all tags are partitioned in such a way so that the adversary can not distinguish the tags that belong to the same partition but she can distinguish the tags belong to different partitions. Here, $|P_i|$ denotes the size of such partition $P_i$ and $\frac{|P_i|}{N}$ is the probability that a randomly chosen tag belongs to partition $P_i$.

\subsection{Level of privacy based on anonymity set}

The level of privacy $\Re$ based on anonymity set is characterized as average anonymity set size normalized with the total number of tags $N$ \cite{Buttyan2006} \cite{Avoine2007} \cite{RAHMAN2017}.
\begin{equation}
       \Re=\frac{1}{N}\sum_{i} |P_i|\frac{|P_i|}{N}= \frac{1}{N^2}\sum_{i} |P_i|^2. \label{eq1}
\end{equation}
In the proposed scheme, if a tag is compromised, it does not leak any information about the subgroup in which it belongs. For this reason, the adversary can not distinguish between two tags whether they belongs to same subgroup or not. So, if $\mathbb{C}$ is the total number of compromised tags in the whole system, we partitioned the system into $\mathbb{C}$ number of anonymity sets with size $1$ and one another anonymity set of size $(N-\mathbb{C})$.
Using equation \ref{eq1}, the level of privacy $\Re$ achieved by our scheme is
\begin{equation}
\Re = \frac{1}{N^2}\{\mathbb{C}+(N-\mathbb{C})^2\}, \label{eq2}
\end{equation}
where $N$ is the total number of tags in the system and $\mathbb{C}$ is the total number of compromised tags in the system.

\subsection{Level of privacy based on information leakage in bits}

According to Rahman et al. \cite{RAHMAN2017}, if an adversary partitioned a system with $N$ tags into $k$ disjoint sets, then the information leakage in bits can be expressed as follows:
\begin{equation}
 \mathbb{I}=\sum_{i=1}^{k}\frac{|P_i|}{N}\log_2 \left(\frac{N}{|P_i|}\right). \label{eq3}
\end{equation}
In the proposed scheme, if $\mathbb{C}$ is the total number of compromised tags in the system. Then we partitioned the system with $N$ tags into $\mathbb{C}$ anonymity sets of size $1$ and one another anonymity set of size $(N-\mathbb{C})$. According to our partitions, the information leakage in bits  is as follows
\begin{equation}
 \mathbb{I}=\frac{\mathbb{C}}{N}\log_2 N + \frac{(N-\mathbb{C})}{N}\log_2 \left(\frac{N}{N-\mathbb{C}}\right). \label{eq4}  
\end{equation}
\section{Experimental Results}
\begin{figure}[t]
\begin{center}
\includegraphics[height=2.5in,width=5.0in]{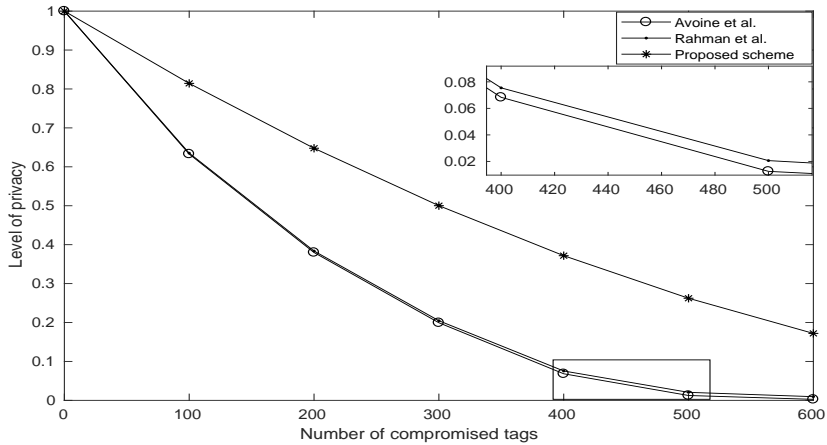}\\
\caption{Level of privacy of the system based on anonymity set}\label{p6_figure2}
\end{center}
\end{figure}

\begin{figure}[t]
\begin{center}
\includegraphics[height=2.5in,width=5.0in]{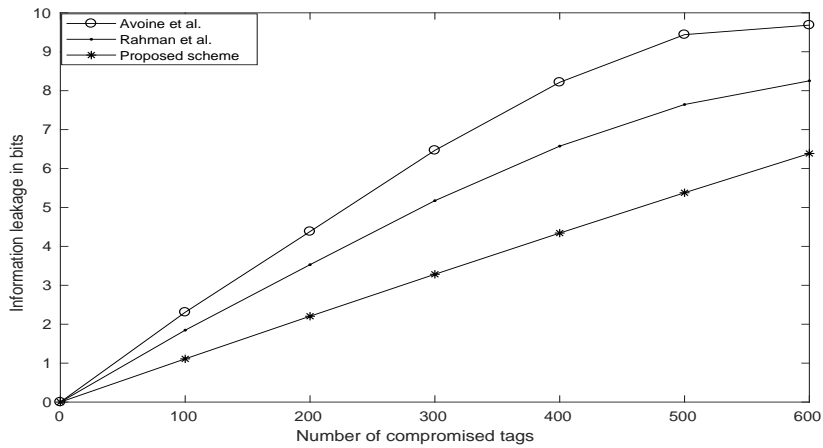}\\
\caption{Level of privacy of the system based on information leakage in bits}\label{p6_figure3}
\end{center}
\end{figure}

In this section, we compare our scheme with Avoine et al. \cite{Avoine2007} and Rahman et al. \cite{RAHMAN2017} using a matlab simulation. The simulation is done using the expressions (\ref{eq1}) - (\ref{eq4}). In the simulation, we assume that the system has $N=2^{10}$ number of tags and all the tags are divided into 32 groups. We choose range of compromised tags from 0 to 600. In the proposed scheme, it is not necessary to take same number of tags in each groups. In the simulation, we run 100 simulations for each value of compromised tags $\mathbb{C}$ in the system. In each simulation run, compromised tags are chosen uniformly random from the groups of all tags. Finally, we average all the obtained values over all simulation runs.  The simulation results are shown in Figure \ref{p6_figure2} and Figure \ref{p6_figure3}. The simulation results of the Figure \ref{p6_figure2} shows that the privacy level achieved by the proposed scheme is $94.42\%$ and $98.43\%$ better than Rahman et al. and Avoine et al. respectively, when $\mathbb{C}$ becomes 600 in a similar setup. According to simulation result shown in Figure \ref{p6_figure3}, the proposed scheme discloses $22.62\%$ and $34.05\%$ less information than Rahman et al. and Avoine et al. respectively when $\mathbb{C}$ becomes 600. Thus the proposed scheme achieves higher improvement in terms of privacy level and information leakage than the other schemes, when some tags are compromised by an adversary.


\section{Conclusion}
In this paper, we have proposed a group based authentication scheme for RFID system based on a cyclic group. The detailed formal analysis shows that it preserves information privacy and un-traceability. The informal analysis shows that the scheme resists various existing attacks. The performance analysis illustrates that the scheme uses very less resources on tags to performs computational work and storage data. The experimental results show that our scheme preserves high level privacy when some tags are compromised. Thus, the analysis and prominent features conclude that the scheme is secure and efficient for a low-cost RFID system. 

\textbf{Conflict of Interest:} The author P K Maurya thanks to MHRD, India, for financial support of his research.



\end{document}